\theoremstyle{plain}
\newtheorem{theorem}{Theorem}
\newtheorem{lemma}{Lemma}
\newtheorem{corollary}{Corollary}
\renewcommand{\vec}[1]{\mathbf{#1}}  
\newcommand{\mat}[1]{\mathbf{#1}}
\begin{document}

\sloppy

\title{Feasibility Conditions of Interference Alignment\\via Two Orthogonal Subcarriers}

\author{
  \IEEEauthorblockN{Stefan Dierks and Gerhard Kramer}
  \IEEEauthorblockA{Institute for Communications Engineering\\
    Technische Universit\"{a}t M\"{u}nchen,\\
    Munich, Germany\\
    Email:  \{stefan.dierks, gerhard.kramer\}@tum.de} 
  \and
  \IEEEauthorblockN{Wolfgang Zirwas}
  \IEEEauthorblockA{Nokia Siemens Networks\\
    Munich, Germany\\
    Email: wolfgang.zirwas@nsn.com}
}



\maketitle

\begin{abstract}
Conditions are derived on line-of-sight channels to ensure the feasibility of interference alignment. The conditions involve choosing only the spacing between two subcarriers of an orthogonal frequency division multiplexing (OFDM) scheme.
The maximal degrees-of-freedom are achieved and even an upper bound on the sum-rate of interference alignment is approached arbitrarily closely.

\end{abstract}

\section{Introduction}

Interference alignment (IA) is a promising method because it achieves higher throughput in interference limited scenarios than conventional methods such as time- or frequency-division multiplexing or treating interference as noise \cite{Jafar2010}.
The main idea of IA is to use precoding at the transmitters to align interference at each receiver in one subspace. The orthogonal subspace is used for interference-free communication.

One commonly measures performance by the sum of the rates that the users can transmit reliably. The degrees-of-freedom (DoF) are defined as
\begin{IEEEeqnarray}{rCl}
d &=& \lim_{\text{SNR}\rightarrow\infty}\frac{C_\text{sum}(\text{SNR})}{\log(\text{SNR})},
\end{IEEEeqnarray}
where $C_\text{sum}(\text{SNR})$ is the sum-rate capacity at the signal to noise ratio $\text{SNR}$.
The DoF represent the number of non-interfering data streams that can be simultaneously transmitted over the network.

For single antennas IA achieves the maximal DoF asymptotically with an \emph{infinite} number of subcarriers or time-slots \cite{Cadambe2008}. We derive conditions for which \emph{two} subcarriers make IA feasible for general channels in Section \ref{sec:gen_ch} and for line-of-sight channels in Section \ref{sec:single-tap}. For line-of-sight (i.e. single-tap) channels these conditions are fulfilled by choosing the subcarrier spacing carefully, while in prior art the subcarriers are assumed to be fixed when IA is applied. Hence for line-of-sight channels we achieve the maximal DoF and even achieve an upper bound on the sum-rate of IA arbitrarily closely.



\section{System Model}
\label{sec:sys_mod}

Consider an interference channel with $K$ user pairs, where each transmitter sends either one or two streams to its receiver. Each node is equipped with a single antenna and uses the same two orthogonal subcarriers. The received signal in the frequency domain at receiver $i$ is
\begin{IEEEeqnarray}{rCl}
\label{eq:sys_mod}
\vec{y}_{i} &=& \mat{U}_{i}^{\dagger} \mat{H}_{i,i} \mat{V}_{i} \vec{s}_{i} + \left[ \sum_{k=1, k\neq i}^{K} \mat{U}_{i}^{\dagger} \mat{H}_{i,k} \mat{V}_{k} \vec{s}_{k} \right] + \mat{U}_{i}^{\dagger}\vec{z}_{i}
\end{IEEEeqnarray}
where $\vec{s}_{k}$ is the vector of symbols at transmitter $k$ with length $D\in\{1,2\}$, $\mat{V}_{k}$ is a $2\times D$ precoding matrix, $\mat{H}_{i,k}$ is a $2 \times 2$ channel matrix in the frequency domain between transmitter $k$ and receiver $i$, $\mat{U}_{i}$ is a $D\times 2$ receive filter matrix. $\mat{U}^\dagger$  is the complex conjugate transpose of matrix $\mat{U}$, while $u^\ast$ is the complex conjugate of scalar $u$.
The precoding and the receive filter matrices are chosen to satisfy $\Vert\mat{V}_{k}\Vert_{F} \leq 1$ and $\Vert\mat{U}_{i}\Vert_{F} \leq 1$, where $\Vert\; \Vert_{F}$ denotes the Frobenius norm.
$\vec{z}_{i}$ is a proper complex AWGN vector of length $2$ and variance $\sigma^2_{i}$. The first term on the right hand side of \eqref{eq:sys_mod} carries the data of receiver $i$, while the sum represents the interference, and the last term is filtered noise. Channels connecting the transmitter and receiver of the same user pair are called \emph{direct} channels; the other channels (i.e. $\mat{H}_{i,k}$ $i\neq k$) are called \emph{cross} channels.

For orthogonal subcarriers the channel matrices $\mat{H}$ are diagonal. The diagonal entries are denoted by $h_{i,k}^{(l)} \in \mathbb{C}$, where $l$ indicates the subcarrier index. We write
\begin{IEEEeqnarray}{rcccl}
\mat{H}_{i,k} &=& \begin{bmatrix}
                h_{i,k}^{(1)} \!& 0\\
		0 \!& h_{i,k}^{(2)}
                \end{bmatrix}
		&=& \begin{bmatrix}
                \left|h_{i,k}^{(1)}\right|e^{-j\angle h_{i,k}^{(1)}} \!& 0\\
		0 \!& \left|h_{i,k}^{(2)}\right|e^{-j\angle h_{i,k}^{(2)}}
                \end{bmatrix}\!,\quad
\end{IEEEeqnarray}
where $\left|x\right|$ denotes the amplitude of $x$ and $\angle x$ denotes the phase of $x$ in radians.
For line-of-sight channels the amplitudes are equal for all subcarriers, while the phase rotations depend on the delay $\tau_{i,k}$ and the subcarrier frequencies $f^{(1)}$ and $f^{(2)}$:
\begin{IEEEeqnarray}{rcl}
\mat{H}^\text{LoS}_{i,k} &=& \left|h_{i,k}\right|\begin{bmatrix}
                e^{-j 2\pi f^{(1)} \tau_{i,k}} \!& 0\\
		0 \!& e^{-j 2\pi f^{(2)} \tau_{i,k}}
                \end{bmatrix}\!.\quad
\end{IEEEeqnarray}
The amplitudes are bounded as $0<\left|h_{i,k}^{(l)}\right|$ to avoid degenerate channel conditions.
We assume perfect channel knowledge of all channel parameters at all nodes.

%

\section{Feasibility of Interference Alignment via Two Subcarriers}
\label{sec:gen_ch}
For single antenna nodes, the DoF are upper-bounded by $1/2$ per user pair \cite{Cadambe2008}. The precoder and receive filters reduce to vectors $\vec{v}_{k}$ and $\vec{u}_{i}$ and interference is aligned if \cite{Gomadam2011}
\begin{IEEEeqnarray}{rCll}
\label{eq:ia_crossCh}
\vec{u}_{i}^{\dagger}\mat{H}_{i,k}\vec{v}_{k} &=& \mat{0}, &\quad\forall i \neq k\\
\label{eq:ia_directCh}
\left|\vec{u}_{i}^{\dagger}\mat{H}_{i,i}\vec{v}_{i}\right| &>& 0, &\quad\forall i.
\end{IEEEeqnarray}
The equations \eqref{eq:ia_crossCh} mean that the interference lies in the null-space of the receive filter, while the equations \eqref{eq:ia_directCh} ensure that the effective channel $\bar{h}_{i}=\vec{u}_{i}^{\dagger}\mat{H}_{i,i}\vec{v}_{i}$ (which is interference-free if the first set of equations is fulfilled) has unit rank.
The question of IA feasibility asks if there is a solution for $\vec{u}_{i}$ $\forall i$ and $\vec{v}_{k}$ $\forall k$ such that \eqref{eq:ia_crossCh} and \eqref{eq:ia_directCh} are fulfilled.

Suppose that all channel coefficients are chosen independently with a continuous distribution. The conditions \eqref{eq:ia_directCh} are fulfilled with probability $1$ if the precoder and receive filters satisfy \eqref{eq:ia_crossCh}. Hence we need to examine the feasibility of \eqref{eq:ia_crossCh} to show that the maximal DoF are achievable.

The question of feasibility is tackled, e.g., in \cite{Cadambe2008}\nocite{Gomadam2011}\nocite{Cadambe2007}-\cite{Yetis2009}.
In \cite{Cadambe2008} it is shown that the maximal DoF are asymptotically achievable with IA for time-varying channels by increasing the number of symbol extensions (i.e. the number of subcarriers or time slots).
We show that for interesting channel conditions IA is feasible with \emph{two} subcarriers. For this we use Lemma \ref{lem:cross_ch} to write the IA equation set \eqref{eq:ia_crossCh} as the sum of logarithms of the channel, precoder, and receive filter coefficient fractions. With Lemma \ref{lem:cross_ch} we prove Theorem \ref{th:3user} which states feasibility conditions on the channel coefficients for the 3 user pairs case. In the following subsection we examine $K$ user pairs.

\begin{lemma}
\label{lem:cross_ch}
 For single antenna nodes and two orthogonal subcarriers the IA conditions \eqref{eq:ia_crossCh} are
 \begin{IEEEeqnarray}{rCl}
  \ln\!\left(\! \frac{u_{i}^{(2)\ast}}{u_{i}^{(1)\ast}} \!\right)
  \!+\! \ln\!\left(\! \frac{v_{k}^{(2)}}{v_{k}^{(1)}} \!\right)
  &=&j\pi\left(1\!+\!2n_{i,k}\right)
  \!-\! \ln\!\left(\! \frac{h_{i,k}^{(2)}}{h_{i,k}^{(1)}} \!\right)
  \label{eq:IAasSum}
 \end{IEEEeqnarray}
 for all $i\neq k$, where $n_{i,k}\in\mathbb{Z}$ can be any integer.
\end{lemma}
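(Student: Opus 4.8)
The plan is to start from the cross-channel condition \eqref{eq:ia_crossCh} and exploit the diagonal structure of $\mat{H}_{i,k}$ to collapse each matrix equation into a single scalar equation, which then turns into \eqref{eq:IAasSum} after taking a complex logarithm. First I would write out $\vec{u}_{i}^{\dagger}\mat{H}_{i,k}\vec{v}_{k}$ componentwise. With $\vec{u}_{i}=[u_{i}^{(1)},u_{i}^{(2)}]^{T}$, $\vec{v}_{k}=[v_{k}^{(1)},v_{k}^{(2)}]^{T}$, and $\mat{H}_{i,k}=\mathrm{diag}\!\left(h_{i,k}^{(1)},h_{i,k}^{(2)}\right)$, the product reduces to the scalar relation
\begin{IEEEeqnarray}{rCl}
u_{i}^{(1)\ast}h_{i,k}^{(1)}v_{k}^{(1)}+u_{i}^{(2)\ast}h_{i,k}^{(2)}v_{k}^{(2)} &=& 0,
\end{IEEEeqnarray}
one equation per ordered pair $i\neq k$.

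Next I would isolate the ratios that appear in \eqref{eq:IAasSum}. The channel amplitudes satisfy $0<\left|h_{i,k}^{(l)}\right|$ by assumption, and for any nondegenerate alignment solution the first filter and precoder components are nonzero, so I may divide by $u_{i}^{(1)\ast}h_{i,k}^{(1)}v_{k}^{(1)}$ to obtain
\begin{IEEEeqnarray}{rCl}
\frac{u_{i}^{(2)\ast}}{u_{i}^{(1)\ast}}\cdot\frac{h_{i,k}^{(2)}}{h_{i,k}^{(1)}}\cdot\frac{v_{k}^{(2)}}{v_{k}^{(1)}} &=& -1.
\end{IEEEeqnarray}
Taking the natural logarithm of both sides, splitting the logarithm of the product into a sum, and moving the channel term to the right-hand side then yields \eqref{eq:IAasSum}, once $\ln(-1)$ is identified with $j\pi(1+2n_{i,k})$.

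The one delicate point, which I expect to be the main thing to state carefully rather than a genuine obstacle, is the multivaluedness of the complex logarithm. Writing $-1=e^{j\pi(1+2n_{i,k})}$ for an arbitrary $n_{i,k}\in\mathbb{Z}$ accounts for every phase representation of $-1$, and the integer branch ambiguity incurred when splitting $\ln$ of a product into a sum of logarithms is itself a multiple of $2\pi j$, which is absorbed into the even part $2n_{i,k}$. Hence a single integer parameter $n_{i,k}$ per equation captures exactly the freedom in the logarithm. The equivalence is reversible: exponentiating \eqref{eq:IAasSum} recovers the product equal to $-1$, and multiplying through by the nonzero denominator returns the original scalar condition, so \eqref{eq:ia_crossCh} and \eqref{eq:IAasSum} describe the same set of feasible filters and precoders.
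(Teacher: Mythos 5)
Your proposal is correct and follows essentially the same route as the paper: reduce \eqref{eq:ia_crossCh} to the scalar equation $u_{i}^{(1)\ast}h_{i,k}^{(1)}v_{k}^{(1)}+u_{i}^{(2)\ast}h_{i,k}^{(2)}v_{k}^{(2)}=0$, divide by the first term, and take the multivalued complex logarithm with the branch ambiguity absorbed into $n_{i,k}$. The only difference is that where you simply assert that the components are nonzero for a nondegenerate solution, the paper explicitly enumerates the zero-component cases and shows each either violates \eqref{eq:ia_directCh} or propagates to $\vec{u}_{i}=\vec{0}$ or $\vec{v}_{k}=\vec{0}$ across the full equation set; adding that short case check would make your division step airtight.
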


\begin{proof}
We write \eqref{eq:ia_crossCh} as the equation set
\begin{IEEEeqnarray}{rCl}
u_{i}^{(1)\ast} h_{i,k}^{(1)} v_{k}^{(1)} 
+ u_{i}^{(2)\ast} h_{i,k}^{(2)} v_{k}^{(2)} 
\label{eq:IA2sc}
&=& 0, \quad \forall i \neq k.
\end{IEEEeqnarray}
There exist trivial solutions of \eqref{eq:IA2sc}:
\begin{itemize}
 \item $\vec{u}_{i} = \vec{0}$ or $\vec{v}_{i}= \vec{0}$, which both violate \eqref{eq:ia_directCh};
 \item $u_{i}^{(1)} \!=\! v_{k}^{(2)} \!=\! 0$ or $v_{i}^{(1)} \!=\! u_{k}^{(2)} \!=\! 0$, which, when examining the equation set $\forall i \neq k$, lead to the invalid solutions $\vec{u}_{i} = \vec{0}$  $\forall i$ or $\vec{v}_{k} = \vec{0}$ $\forall k$.
\end{itemize}
Other trivial solutions with $u_{i}^{(l)} \!=\! 0$ or $v_{k}^{(l)} \!=\! 0$ do not exist, since we have $h_{i,k}^{(1)}\neq0$ and $h_{i,k}^{(2)}\neq0$ (recall that $\left| h_{i,k}^{(l)}\right|>0$). Hence all $u_{i}$ and $v_{k}$ are non-zero for nontrivial solutions.

Manipulating \eqref{eq:IA2sc} we obtain
\begin{IEEEeqnarray}{rCl}
\frac{u_{i}^{(2)\ast} h_{i,k}^{(2)} v_{k}^{(2)}}{u_{i}^{(1)\ast} h_{i,k}^{(1)} v_{k}^{(1)}}&=&-1
\end{IEEEeqnarray}
and therefore
\begin{IEEEeqnarray}{rCl}
\ln\left( \frac{u_{i}^{(2)\ast}h_{i,k}^{(2)}v_{k}^{(2)}}{u_{i}^{(1)\ast}h_{i,k}^{(1)}v_{k}^{(1)}} \right)
&=&j\pi\left(1+2n_{i,k}\right)
\end{IEEEeqnarray}
where $n_{i,k}\in\mathbb{Z}$.
\end{proof}

Note that $h_{i,k}^{(1)}=0$ or $h_{i,k}^{(2)}=0$ have zero probability for continuous distributions.

\subsection{3 User Pairs}

\begin{theorem}
\label{th:3user}
 Three DoF over two subcarriers are feasible for three user pairs with single antennas if the following condition holds
  \begin{IEEEeqnarray}{rCl}
  \frac{h_{1,2}^{(2)}}{h_{1,2}^{(1)}}\frac{h_{1,3}^{(1)}}{h_{1,3}^{(2)}}
  \frac{h_{2,3}^{(2)}}{h_{2,3}^{(1)}}\frac{h_{2,1}^{(1)}}{h_{2,1}^{(2)}}
  \frac{h_{3,1}^{(2)}}{h_{3,1}^{(1)}}\frac{h_{3,2}^{(1)}}{h_{3,2}^{(2)}}&=&1.
  \label{eq:th_3user}
   \end{IEEEeqnarray}
\end{theorem}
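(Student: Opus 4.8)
The plan is to turn the six cross-channel equations supplied by Lemma~\ref{lem:cross_ch} into a linear system over $\mathbb{C}$ and to read off feasibility as a single scalar consistency condition. Writing $U_{i}=\ln(u_{i}^{(2)\ast}/u_{i}^{(1)\ast})$, $V_{k}=\ln(v_{k}^{(2)}/v_{k}^{(1)})$ and $H_{i,k}=\ln(h_{i,k}^{(2)}/h_{i,k}^{(1)})$, equation \eqref{eq:IAasSum} for the three user pairs becomes the six relations $U_{i}+V_{k}=j\pi(1+2n_{i,k})-H_{i,k}$, one per ordered pair $i\neq k$. This is a linear system in the six unknowns $U_{1},U_{2},U_{3},V_{1},V_{2},V_{3}$, and feasibility of three DoF over the two subcarriers amounts to showing it admits a solution for some choice of the integers $n_{i,k}$, after which nonzero filter coordinates are recovered by exponentiation and normalized to meet the Frobenius constraints.

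First I would examine the coefficient pattern. Each equation couples one $U_{i}$ to one $V_{k}$ with $i\neq k$, so the incidence structure is the complete bipartite graph $K_{3,3}$ with the perfect matching $\{(i,i)\}$ deleted; this is a single $6$-cycle, e.g. $U_{1}-V_{2}-U_{3}-V_{1}-U_{2}-V_{3}-U_{1}$. Since this graph is connected and bipartite, the $6\times 6$ coefficient matrix has rank $5$: its row space drops by exactly one, the lone dependency being the alternating sum of the equations around the cycle, which annihilates every unknown. Consequently the system is solvable \emph{if and only if} the same alternating combination of the right-hand sides vanishes.

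Next I would evaluate that alternating sum. The odd constants $j\pi(1-1+1-1+1-1)$ cancel; the even parts collect into $j2\pi N$ with $N=n_{1,2}-n_{3,2}+n_{3,1}-n_{2,1}+n_{2,3}-n_{1,3}\in\mathbb{Z}$, which ranges over all integers as the $n_{i,k}$ vary; and the $H$-terms assemble into the logarithm of the product on the left of \eqref{eq:th_3user}. The consistency condition therefore reads $\ln(\text{product})\in j2\pi\mathbb{Z}$, i.e. after exponentiation exactly \eqref{eq:th_3user}. Choosing the $n_{i,k}$ to realize the required $N$, the resulting one-parameter family of solutions $(U_{i},V_{k})$ exists; exponentiating yields nonzero $u_{i}^{(l)},v_{k}^{(l)}$, and \eqref{eq:ia_directCh} then holds with probability one as already noted, establishing feasibility.

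I expect the main obstacle to be bookkeeping rather than depth: fixing the cycle traversal and the attendant signs so that the alternating sum reproduces \emph{precisely} the six ratios of \eqref{eq:th_3user} rather than some permuted or inverted variant, and arguing carefully that the cyclic dependency is the only one, so that the rank is exactly $5$ and the single scalar condition is both necessary and sufficient for a nontrivial solution.
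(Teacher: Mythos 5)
Your proposal is correct and follows essentially the same route as the paper: linearize the six cross-channel conditions via Lemma~\ref{lem:cross_ch}, observe that the resulting $6\times 6$ system has rank $5$, and extract the single consistency condition on the alternating sum of the right-hand sides, which exponentiates to \eqref{eq:th_3user}. Your graph-theoretic justification of the rank (the incidence structure is a single $6$-cycle, so exactly one dependency) is a nice touch where the paper simply asserts $\text{Rank}(\mat{A})=5$, but the argument is otherwise identical.
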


\begin{proof}
 \footnote{The proof can also be obtained by examining the subspaces spanned by the channel matrix and the precoding vector as is done in Section IV-D of \cite{Cadambe2008}. For interference to align one must have $\text{span}\!\left(\mat{H}_{1,2}\vec{v}_{2}\right) = \text{span}\!\left(\mat{H}_{1,3}\vec{v}_{3}\right)$ and $\text{span}\!\left(\mat{H}_{2,3}\vec{v}_{3}\right) = \text{span}\!\left(\mat{H}_{2,1}\vec{v}_{1}\right)$ and $\text{span}\!\left(\mat{H}_{3,1}\vec{v}_{1}\right) = \text{span}\!\left(\mat{H}_{3,2}\vec{v}_{2}\right)$. From this one obtains $\text{span}\!\left(\vec{v}_{1}\right) = \text{span}\!\left(\mat{T}\vec{v}_{1}\right)$, where $\mat{T}= \left(\mat{H}_{1,3}\right)^{-1}\mat{H}_{2,3} \left(\mat{H}_{2,1}\right)^{-1}\mat{H}_{1,2} \left(\mat{H}_{3,2}\right)^{-1}\mat{H}_{3,1}$. Due to the diagonal structure of the channel matrices, $\mat{T}$ is also diagonal. Unless $\mat{T}$ is a (scaled) identity matrix the precoder $\vec{v}_{1}$ must be an eigenvector of all channel matrices, leading to interference not being aligned. Setting $\mat{T}$ as a scaled identity matrix leads to \eqref{eq:th_3user}.}
 For three users there are six cross-channels. According to Lemma \ref{lem:cross_ch} six equations of type \eqref{eq:IAasSum} must be satisfied.
 We write these equations in the form $\mat{A}\vec{x} = \vec{b}$ as follows:
 \begin{IEEEeqnarray}{rcl}
  \underbrace{
		  \begin{bmatrix}
		  \!1\!\! & \!0\!\! & \!0\!\! & \!0\!\! & \!1\!\! & \!0\!\\
		  \!1\!\! & \!0\!\! & \!0\!\! & \!0\!\! & \!0\!\! & \!1\!\\
		  \!0\!\! & \!1\!\! & \!0\!\! & \!0\!\! & \!0\!\! & \!1\!\\
		  \!0\!\! & \!1\!\! & \!0\!\! & \!1\!\! & \!0\!\! & \!0\!\\
		  \!0\!\! & \!0\!\! & \!1\!\! & \!1\!\! & \!0\!\! & \!0\!\\
		  \!0\!\! & \!0\!\! & \!1\!\! & \!0\!\! & \!1\!\! & \!0\!\\
		  \end{bmatrix}\!}_{\text{Rank}(\mat{A}) = 5}\!
		  \begin{bmatrix}
		  \!\ln\!\left(\! \frac{u^{(2)\ast}_{1}}{u^{(1)\ast}_{1}}\!\right)\!\\
		  \!\ln\!\left(\! \frac{u^{(2)\ast}_{2}}{u^{(1)\ast}_{2}}\!\right)\!\\
		  \!\ln\!\left(\! \frac{u^{(2)\ast}_{3}}{u^{(1)\ast}_{3}}\!\right)\!\\
		  \!\ln\!\left(\! \frac{v^{(2)}_{1}}{v^{(1)}_{1}}\!\right)\!\\
		  \!\ln\!\left(\! \frac{v^{(2)}_{2}}{v^{(1)}_{2}}\!\right)\!\\
		  \!\ln\!\left(\! \frac{v^{(2)}_{3}}{v^{(1)}_{3}}\!\right)\!
		  \end{bmatrix}
  &\!=\!&
		  \begin{bmatrix}
		  j\pi\!\left(\!1\!+\!2n_{1,2}\!\right)\!-\!\ln\!\left(\! \frac{h_{1,2}^{(2)}}{h_{1,2}^{(1)}}\!\right)\!\\
		  j\pi\!\left(\!1\!+\!2n_{1,3}\!\right)\!-\!\ln\!\left(\! \frac{h_{1,3}^{(2)}}{h_{1,3}^{(1)}}\!\right)\!\\
		  j\pi\!\left(\!1\!+\!2n_{2,3}\!\right)\!-\!\ln\!\left(\! \frac{h_{2,3}^{(2)}}{h_{2,3}^{(1)}}\!\right)\!\\
		  j\pi\!\left(\!1\!+\!2n_{2,1}\!\right)\!-\!\ln\!\left(\! \frac{h_{2,1}^{(2)}}{h_{2,1}^{(1)}}\!\right)\!\\
		  j\pi\!\left(\!1\!+\!2n_{3,1}\!\right)\!-\!\ln\!\left(\! \frac{h_{3,1}^{(2)}}{h_{3,1}^{(1)}}\!\right)\!\\
		  j\pi\!\left(\!1\!+\!2n_{3,2}\!\right)\!-\!\ln\!\left(\! \frac{h_{3,2}^{(2)}}{h_{3,2}^{(1)}}\!\right)\!
		  \end{bmatrix}\!.\qquad
	\label{eq:IA_linSys}
 \end{IEEEeqnarray}
 Since the rank of $\mat{A}$ is $5$, which is less than the number of equations, a solution exists if and only if the rank of the augmented matrix $\left(\mat{A}|\vec{b}\right)$ is equal to the rank of $\mat{A}$ (or $\vec{b}$ is in the column space or image of $\mat{A}$). This condition is fulfilled for
  \begin{IEEEeqnarray}{RcCl}
  \notag
  & \ln\!\left(\! \frac{h_{1,2}^{(2)}}{h_{1,2}^{(1)}}\!\right)
  - \ln\!\left(\! \frac{h_{1,3}^{(2)}}{h_{1,3}^{(1)}}\!\right)
  + \ln\!\left(\! \frac{h_{2,3}^{(2)}}{h_{2,3}^{(1)}}\!\right)\\
  \label{eq:con_3user}
  -& \ln\!\left(\! \frac{h_{2,1}^{(2)}}{h_{2,1}^{(1)}}\!\right)
  + \ln\!\left(\! \frac{h_{3,1}^{(2)}}{h_{3,1}^{(1)}}\!\right)
  - \ln\!\left(\! \frac{h_{3,2}^{(2)}}{h_{3,2}^{(1)}}\!\right) &=& j2\pi n,
   \end{IEEEeqnarray}
 where $n=n_{1,2}-n_{1,3}+n_{2,3}-n_{2,1}+n_{3,1}-n_{3,2}\in\mathbb{Z}$.
\end{proof}

Theorem \ref{th:3user} can be expressed as two equations: One for the subcarrier amplitudes
\begin{IEEEeqnarray}{rCl}
\label{eq:IA3user_amp}
\frac{\left|h_{1,2}^{(2)}\right|}{\left|h_{1,2}^{(1)}\right|}\frac{\left|h_{1,3}^{(1)}\right|}{\left|h_{1,3}^{(2)}\right|}
\frac{\left|h_{2,3}^{(2)}\right|}{\left|h_{2,3}^{(1)}\right|}\frac{\left|h_{2,1}^{(1)}\right|}{\left|h_{2,1}^{(2)}\right|}
\frac{\left|h_{3,1}^{(2)}\right|}{\left|h_{3,1}^{(1)}\right|}\frac{\left|h_{3,2}^{(1)}\right|}{\left|h_{3,2}^{(2)}\right|} &=& 1
\end{IEEEeqnarray}
and one for the subcarrier phase rotations
\begin{IEEEeqnarray}{rCl}
\notag
&-& \angle h_{1,2}^{(2)} + \angle h_{1,2}^{(1)}
+ \angle h_{1,3}^{(2)} - \angle h_{1,3}^{(1)}
- \angle h_{2,3}^{(2)} + \angle h_{2,3}^{(1)}\\
\notag
&+& \angle h_{2,1}^{(2)} - \angle h_{2,1}^{(1)}
- \angle h_{3,1}^{(2)} + \angle h_{3,1}^{(1)}
+ \angle h_{3,2}^{(2)} - \angle h_{3,2}^{(1)}\\
\label{eq:IA3user_pha}
&=& 2\pi n.
\end{IEEEeqnarray}

\subsection{$K$ User Pairs}
\label{sec:k-user}

For $K$ user pairs there are $K(K-1)$ cross-channels and we hence have $K(K-1)$ equations of type \eqref{eq:IAasSum}. We collect them into an equation system $\mat{A}\vec{x} = \vec{b}$, where $\mat{A}$ is of dimension $K(K-1) \times 2K$, but has rank $2K-1$.

The augmented matrix $\left(\mat{A}|\vec{b}\right)$ again must have the same rank as $\mat{A}$ for a solution to exist. Transforming $\mat{A}$ to row-echelon form by using Gaussian elimination results in a new matrix $\mat{A}^\prime$ where the last
 \begin{IEEEeqnarray}{rCl}
 K(K-1) - (2K-1) &=& K^2-3K+1
 \end{IEEEeqnarray}
rows are zero. We apply the same transformations to $\vec{b}$ to obtain $\vec{b}^\prime$. The last $K^2-3K+1$ entries of $\vec{b}^\prime$ must be zero, and are of the form
 \begin{IEEEeqnarray}{c}
 \sum_{\forall i,k} \alpha_{i.k}^{[w]}\left(j\pi\left(1+2n_{i,k}\right) - \ln\!\left( \frac{h_{i,k}^{(2)}}{h_{i,k}^{(1)}} \right)\right)
  \end{IEEEeqnarray}
where $\alpha_{i.k}^{[w]}\in\left\{-1,0,1\right\}$ are the weights of the $w$-th row.
Hence we obtain $K^2-3K+1$ equations of type similar to \eqref{eq:con_3user} which must be fulfilled for feasibility of IA.

\section{Special case: 3 User Pairs and Line-of-Sight Channels}
\label{sec:single-tap}

We examine IA for the special case of line-of-sight channels and $K=3$. We show that the feasibility condition of the channel can be fulfilled by choosing the subcarrier spacing carefully. We also derive the amplitudes of the effective channels and show that for increasing bandwidth an upper bound on the sum-rate of the presented scheme can be reached arbitrary closely.

\begin{corollary}
\label{cor:1tap}
 For line-of-sight channels the condition of Theorem \ref{th:3user} simplifies to
   \begin{IEEEeqnarray}{rCL}
  \label{eq:IA3user_los}
  \left(\! f^{(2)} \!-\! f^{(1)} \!\right) \left( \tau_{1,3} \!-\! \tau_{1,2} \!+\! \tau_{2,1} \!-\! \tau_{2,3} \!+\! \tau_{3,2} \!-\! \tau_{3,1} \right) &=& n\quad
 \end{IEEEeqnarray}
 where $n\in \mathbb{Z}\setminus \{0\}$.
\end{corollary}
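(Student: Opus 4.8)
The plan is to substitute the line-of-sight channel model directly into the product condition of Theorem~\ref{th:3user} and read off the resulting constraint on the subcarrier spacing. First I would use $h_{i,k}^{(l)} = |h_{i,k}|\,e^{-j2\pi f^{(l)}\tau_{i,k}}$ to evaluate each of the six channel ratios appearing in \eqref{eq:th_3user}. The key simplification is that the LoS amplitude $|h_{i,k}|$ does not depend on the subcarrier index $l$, so it cancels within every individual ratio; each ratio then collapses to a pure phase, $h_{i,k}^{(2)}/h_{i,k}^{(1)} = e^{-j2\pi(f^{(2)}-f^{(1)})\tau_{i,k}}$, that is linear in the delay $\tau_{i,k}$.

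Next I would multiply the six exponentials and collect the exponents. The pattern of numerators and denominators in \eqref{eq:th_3user} fixes the signs, and the condition becomes $\exp\!\big(j2\pi(f^{(2)}-f^{(1)})(\tau_{1,3}-\tau_{1,2}+\tau_{2,1}-\tau_{2,3}+\tau_{3,2}-\tau_{3,1})\big)=1$. Equivalently, and perhaps more transparently, one can start from the already separated equations \eqref{eq:IA3user_amp} and \eqref{eq:IA3user_pha}: for LoS channels the amplitude equation \eqref{eq:IA3user_amp} is satisfied identically (it reduces to $1=1$), while substituting $\angle h_{i,k}^{(l)} = 2\pi f^{(l)}\tau_{i,k}$ into the phase equation \eqref{eq:IA3user_pha} produces exactly the same linear combination of delays. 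Both routes should be mentioned, since the second makes clear that only the phase constraint survives.

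Setting the total phase equal to an integer multiple of $2\pi$ then yields $(f^{(2)}-f^{(1)})(\tau_{1,3}-\tau_{1,2}+\tau_{2,1}-\tau_{2,3}+\tau_{3,2}-\tau_{3,1}) = n$ with $n\in\mathbb{Z}$, which is \eqref{eq:IA3user_los}. The one point requiring care—the main obstacle in stating the result precisely—is the exclusion $n\neq0$. Since the two subcarriers must be distinct we need $f^{(2)}\neq f^{(1)}$; for a generic (nonzero) delay combination the choice $n=0$ would force $f^{(2)}=f^{(1)}$ and is therefore inadmissible. Restricting to $n\in\mathbb{Z}\setminus\{0\}$ both rules out this degenerate case and renders the result constructive: solving for the spacing gives $f^{(2)}-f^{(1)} = n/(\tau_{1,3}-\tau_{1,2}+\tau_{2,1}-\tau_{2,3}+\tau_{3,2}-\tau_{3,1})$, so the feasibility condition can always be met by tuning the subcarrier spacing, which is precisely the claim the section sets out to establish.
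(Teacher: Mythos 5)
Your proposal is correct and follows essentially the same route as the paper: substitute the single-tap phases $\angle h_{i,k}^{(l)} = 2\pi f^{(l)}\tau_{i,k}$ into the condition of Theorem~\ref{th:3user} (the amplitude part being trivially satisfied since $|h_{i,k}^{(1)}|=|h_{i,k}^{(2)}|$), collect the signed delay terms, and exclude $n=0$ because it would force $f^{(2)}=f^{(1)}$ and violate subcarrier orthogonality.
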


\begin{proof}
 For single tap channels the subcarrier amplitudes satisfy $\left|h_{i,k}^{(1)}\right| \!=\! \left|h_{i,k}^{(2)}\right|$ and hence only the phase rotation difference remains. Inserting $\angle h_{i,k}^{(l)} = 2\pi f^{(l)} \tau_{i,k}$ gives
   \begin{IEEEeqnarray}{rcCl}
  \notag
  -& 2\pi \left(f^{(2)} - f^{(1)}\right) \tau_{1,2}
  + 2\pi \left(f^{(2)} - f^{(1)}\right) \tau_{1,3}\\
  -& 2\pi \left(f^{(2)} - f^{(1)}\right) \tau_{2,3}
  + 2\pi \left(f^{(2)} - f^{(1)}\right) \tau_{2,1}\\
  \notag
  -& 2\pi \left(f^{(2)} - f^{(1)}\right) \tau_{3,1}
  + 2\pi \left(f^{(2)} - f^{(1)}\right) \tau_{3,2}  &=& 2\pi n.
    \end{IEEEeqnarray}
 After some manipulations one obtains \eqref{eq:IA3user_los}.
 Choosing $n=0$ violates the assumption of orthogonal sub-carriers,
 since this means $f^{(2)} = f^{(1)}$.
\end{proof}

According to \eqref{eq:IA3user_los} line-of-sight channels may create conditions where IA is feasible by choosing the sub-carrier spacing $\Delta f = f^{(2)} - f^{(1)}$ carefully. This means that the precoding and receive filter vectors can be chosen such that \eqref{eq:ia_crossCh} holds. The required spacing depends only on the delays of the cross channels and the non-zero integer $n$ which can be chosen freely. Hence we can identify a minimal sub-carrier spacing
 \begin{IEEEeqnarray}{rCl}
 \Delta f_\text{min} &=& 1/\left( \tau_{1,3}-\tau_{1,2}+\tau_{2,1}-\tau_{2,3}+\tau_{3,2}-\tau_{3,1} \right)\quad
  \end{IEEEeqnarray}
for which IA is feasible. Any multiple of $\Delta f_\text{min}$, except $0$, creates feasibility again.

For the special case
 \begin{IEEEeqnarray}{rCl}
 \tau_{1,3}-\tau_{1,2}+\tau_{2,1}-\tau_{2,3}+\tau_{3,2}-\tau_{3,1}&=&0
  \end{IEEEeqnarray}
IA is directly feasible and the subcarrier spacing can be chosen arbitrarily. For continuously and independently distributed delays the probability of this event is zero and is not treated further.

Note that we are not limited to using two subcarriers. Since the feasibility depends solely on the spacing, subcarrier pair $f^{(1)}+f_\text{offset}$ and $f^{(2)}+f_\text{offset}$ is feasible if pair $f^{(1)}$ and $f^{(2)}$ is. Even different user pairs, which require different $\Delta f_\text{min}$, can be scheduled in one OFDM frame. It might not be possible to use all subcarriers with IA in which case the remaining subcarriers are used as usual.

\subsection{Effective Channel Amplitudes}
If \eqref{eq:IA3user_los} is fulfilled, the ratios of the precoding and receive filter coefficients are obtained from the system of linear equations \eqref{eq:IA_linSys}. Since $\mat{A}$ is rank-deficient there is one independent variable in $\vec{x}$, which we choose without loss of generality to be $\ln\!\left( u^{(2)\ast}_{1}/u^{(1)\ast}_{1} \right)$. The remaining variables are determined as
 \begin{IEEEeqnarray}{rCl}
\label{eq:x_v2}
 \ln\!\left(\! \frac{v^{(2)}_{2}}{v^{(1)}_{2}}\!\right)
 &=&
 j\pi\!\left(\!1\!+\!2n_{1,2}\!+\!2\Delta f \tau_{1,2}\!\right)\!-\!\ln\!\left(\! \frac{u^{(2)\ast}_{1}}{u^{(1)\ast}_{1}}\!\right)
\\
\label{eq:x_v3}
 \ln\!\left(\! \frac{v^{(2)}_{3}}{v^{(1)}_{3}}\!\right)
 &=&
 j\pi\!\left(\!1\!+\!2n_{1,3}\!+\!2\Delta f \tau_{1,3}\!\right)\!-\!\ln\!\left(\! \frac{u^{(2)\ast}_{1}}{u^{(1)\ast}_{1}}\!\right)
\\
\label{eq:x_u2}
 \ln\!\left(\! \frac{u^{(2)\ast}_{2}}{u^{(1)\ast}_{2}}\!\right)
 &=&
 j\pi\!\left(\!1\!+\!2n_{2,3}\!+\!2\Delta f \tau_{2,3}\!\right)\!-\!\ln\!\left(\! \frac{v^{(2)}_{3}}{u^{(1)}_{3}}\!\right)
\\
\label{eq:x_u3}
 \ln\!\left(\! \frac{u^{(2)\ast}_{3}}{u^{(1)\ast}_{3}}\!\right)
 &=&
 j\pi\!\left(\!1\!+\!2n_{3,2}\!+\!2\Delta f \tau_{3,2}\!\right)\!-\!\ln\!\left(\! \frac{v^{(2)}_{2}}{v^{(1)}_{2}}\!\right)
\\
\label{eq:x_v1}
 \ln\!\left(\! \frac{v^{(2)}_{1}}{v^{(1)}_{1}}\!\right)
 &=&
 j\pi\!\left(\!1\!+\!2n_{2,1}\!+\!2\Delta f \tau_{2,1}\!\right)\!-\!\ln\!\left(\! \frac{u^{(2)\ast}_{2}}{u^{(1)\ast}_{2}}\!\right).
\end{IEEEeqnarray}
From \eqref{eq:x_v2}-\eqref{eq:x_v1} one obtains, for $i\in\left\{1,2,3\right\}$,
\begin{IEEEeqnarray}{rCl}
\left|u_{i}^{(1)\ast}\right| \left|v_{i}^{(1)}\right| &=& \left|u_{i}^{(2)\ast}\right| \left|v_{i}^{(2)}\right|.
\end{IEEEeqnarray}
Together with $\Vert\vec{v}_{i}\Vert_{F} \leq 1$ and $\Vert\vec{u}_{i}\Vert_{F} \leq 1$ one obtains
\begin{IEEEeqnarray}{rCcCl}
\left|u_{i}^{(1)\ast} v_{i}^{(1)}\right| &\leq& \left|v_{i}^{(1)}\right|\sqrt{1- \left|v_{i}^{(1)}\right|^2} &\leq& 1/2.
\end{IEEEeqnarray}
For all else held fixed the $i$-th amplitude is largest if 
\begin{IEEEeqnarray}{rCcCcCcCl}
\label{eq:amp_larg}
\left|v_{i}^{(1)}\right| &=& \left|v_{i}^{(2)}\right| &=& \left|u_{i}^{(1)\ast}\right| &=& \left|u_{i}^{(2)\ast}\right| &=& 1/\sqrt{2}
\end{IEEEeqnarray}
which we use when obtaining the amplitudes.

The amplitude of the first direct channel is
\begin{IEEEeqnarray}{rcl}
\notag
\left|\bar{h}_{1}\right| &=& \left|\vec{u}_{1}^{\dagger}\mat{H}_{1,1}\vec{v}_{1}\right|
\\
\notag
&=& \left|h_{1,1}\right| \left| u_{1}^{(1)\ast} e^{-j2\pi f^{(1)} \tau_{1,1}} v_{1}^{(1)} + u_{1}^{(2)\ast} e^{-2\pi f^{(2)} \tau_{1,1}} v_{1}^{(2)} \right|
\\
\notag
&\stackrel{\text{(a)}}{=}& \frac{\left|h_{1,1}\right|}{2}
\left| 1 \!+\! e^{-j2\pi \Delta f \tau_{1,1} + \ln\!\left( u^{(2)\ast}_{1}\!/u^{(1)\ast}_{1}\right) + \ln\!\left( v^{(2)}_{1}\!/v^{(1)}_{1}\right) } \right|
\\
\notag
&\stackrel{\text{(b)}}{=}& \frac{\left|h_{1,1}\right|}{2}
\left| 1 \!-\! e^{
j2\pi\left(\Delta f\left(-\tau_{1,1}+\tau_{2,1}-\tau_{2,3}+\tau_{1,3}\right)+n_{2,1}-n_{2,3}+n_{1,3}\right)
 } \! \right|
\\
\label{eq:effAmp_los1}
&\stackrel{\text{(c)}}{=}&  \left|h_{1,1}\right| \left| \sin\!
\left(\pi n \Delta f_\text{min} \Delta\tau_1 \right)\right|
\end{IEEEeqnarray}
where $\Delta\tau_1 = -\tau_{1,1}+\tau_{2,1}-\tau_{2,3}+\tau_{1,3}$. For (a) we used \eqref{eq:amp_larg}. For (b) we inserted \eqref{eq:x_v1}, into which we inserted \eqref{eq:x_u2} and \eqref{eq:x_v3}. For (c) we used $\Delta f =  n \Delta f_\text{min}$, $\left|1 - e^{j\theta }\right|=2\left|\sin\!\left(\theta/2\right)\right|$ and $\left|\sin\!\left(\theta+\pi l\right)\right|=\left|\sin\!\left(\theta\right)\right|$ for $l\in\mathbb{Z}$.

The amplitudes of the second and third direct channels follow similarly and are
\begin{IEEEeqnarray}{rcl}
\notag
\left|\bar{h}_{2}\right|
\label{eq:effAmp_los2}
&=& \left|h_{2,2}\right| \left| \sin\!
\left(\pi n \Delta f_\text{min} \Delta\tau_2 \right)\right|
\\
\left|\bar{h}_{3}\right|
\label{eq:effAmp_los3}
&=& \left|h_{3,3}\right| \left|\sin\!
\left(\pi n \Delta f_\text{min} \Delta\tau_3 \right)\right|
\end{IEEEeqnarray}
where $\Delta\tau_2 = -\tau_{2,2}+\tau_{2,3}-\tau_{1,3}+\tau_{1,2}$ and $\Delta\tau_3 = -\tau_{3,3}+\tau_{3,2}-\tau_{1,2}+\tau_{1,3}$.

Examining the effective channel amplitudes, we observe that the amplitude of the $i$-th channel is bounded by
\begin{IEEEeqnarray}{rCcCl}
0 &\leq& \left|\bar{h}_{i}\right| &\leq& \left|h_{i,i}\right|.
\end{IEEEeqnarray}
For a given channel one can influence only the integer $n$ of the argument of the sine function, as the $\Delta\tau$ and the $\Delta f_\text{min}$ are fixed.

\subsection{Upper Bound}
The sum-rate of the proposed scheme for a three user pairs system with line-of-sight channels is upper bounded by
\begin{IEEEeqnarray}{rCl}
R_\text{sum} &\leq& \sum_{\forall i} \log_2 \left(1 + \frac{\left|\bar{h}_{i}\right|^2}{\sigma^2_{i}}\right).
\label{eq:IA_los_sum_UpperBound}
\end{IEEEeqnarray}
Since the sum-rate is different for different choices of $n$, one can optimize the choice of $\Delta f = n \Delta f_\text{min}$ within the available bandwidth to obtain the optimal sum-rate.

\begin{lemma}
 \label{lem:conv_upB}
 For continuously and independently distributed delays the upper bound on the sum-rate of the presented scheme is achieved arbitrarily closely for increasing bandwidth.
\end{lemma}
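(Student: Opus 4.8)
The plan is to first recognize that the sum-rate bound \eqref{eq:IA_los_sum_UpperBound} is largest, over the free integer $n$, exactly when each effective amplitude saturates its ceiling $|\bar{h}_i| = |h_{i,i}|$, i.e.\ when $|\sin(\pi n \Delta f_\text{min}\Delta\tau_i)| = 1$ simultaneously for all $i\in\{1,2,3\}$; this saturated value $\sum_i\log_2(1+|h_{i,i}|^2/\sigma_i^2)$ is the ``upper bound'' of the lemma, and it is a genuine bound by $0\le|\bar{h}_i|\le|h_{i,i}|$. Since $x\mapsto\log_2(1+x^2/\sigma_i^2)$ is continuous, it suffices to drive each argument arbitrarily close to a half-integer. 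Writing $\beta_i := \Delta f_\text{min}\Delta\tau_i = \Delta\tau_i/S$ with $S = \tau_{1,3}-\tau_{1,2}+\tau_{2,1}-\tau_{2,3}+\tau_{3,2}-\tau_{3,1}$, the task reduces to finding integers $n$ for which the fractional part of each $n\beta_i$ is as close to $\tfrac12$ as desired, a problem of simultaneous Diophantine approximation on the three-torus.

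Next I would invoke Kronecker's approximation theorem: the orbit $\{(n\beta_1,n\beta_2,n\beta_3)\bmod 1 : n\in\mathbb{Z}\}$ is dense in $[0,1)^3$ provided $1,\beta_1,\beta_2,\beta_3$ are linearly independent over $\mathbb{Q}$. Density yields, for every $\delta>0$, an integer $n_\delta$ with $\|(n_\delta\beta_1,n_\delta\beta_2,n_\delta\beta_3)-(\tfrac12,\tfrac12,\tfrac12)\|_\infty<\delta$; writing $n_\delta\beta_i = \tfrac12+k_i+\delta_i$ gives $|\sin(\pi n_\delta\beta_i)| = |\cos(\pi\delta_i)|\to 1$ as $\delta\to 0$, so by \eqref{eq:effAmp_los1}--\eqref{eq:effAmp_los3} each $|\bar{h}_i|\to|h_{i,i}|$ and the scheme's sum-rate approaches $\sum_i\log_2(1+|h_{i,i}|^2/\sigma_i^2)$.

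Then I would establish the independence hypothesis almost surely. A nontrivial relation $q_0+q_1\beta_1+q_2\beta_2+q_3\beta_3=0$ with rational $q_j$ is equivalent, upon multiplying by $S$ (nonzero almost surely, as needed for $\Delta f_\text{min}$ to exist), to the rational-coefficient linear relation $q_0 S + q_1\Delta\tau_1+q_2\Delta\tau_2+q_3\Delta\tau_3=0$ among the delays. The decisive observation is that the direct-path delays $\tau_{1,1},\tau_{2,2},\tau_{3,3}$ enter only $\Delta\tau_1,\Delta\tau_2,\Delta\tau_3$ respectively (each with coefficient $-1$) and never enter $S$; hence their coefficients in the relation are exactly $-q_1,-q_2,-q_3$, forcing $q_1=q_2=q_3=0$ and then $q_0=0$ since $S\not\equiv 0$. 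Thus no relation holds identically, so each fixed nonzero rational tuple confines the delays to a measure-zero hyperplane, the countable union over all tuples still has measure zero, and for continuously and independently distributed delays the rational independence---and hence density---holds with probability one.

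Finally I would translate density into the bandwidth statement: for prescribed accuracy $\epsilon>0$ choose $\delta$ so that the resulting amplitudes bring the sum-rate within $\epsilon$ of the bound; the witnessing integer $n_\delta$ is finite, and the choice $\Delta f = n_\delta\Delta f_\text{min}$ requires only bandwidth $|n_\delta|\Delta f_\text{min}$. Hence for every bandwidth exceeding this finite value a feasible $n$ closes the gap to within $\epsilon$, and letting $\epsilon\to 0$ gives the claim. I expect the genericity verification to be the main obstacle: one must exclude an accidental algebraic coincidence among the $\Delta\tau_i$ and $S$, which are all built from the same delays, and it is precisely the isolation of the direct-path delays that rescues the argument; the approximation and bandwidth steps are then routine consequences of Kronecker's theorem and continuity.
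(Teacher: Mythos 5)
Your proof is correct, and it takes a genuinely different route from the paper's. The paper works directly with the decimal expansions of $\lambda_i = \Delta f_\text{min}\Delta\tau_i \bmod 1$: it asserts that the digits are i.i.d.\ uniform, argues that with probability one there is a position $r$ at which all three expansions simultaneously display the first $R$ digits of $\mu=1/2$, and then takes $n=10^{r}$ to shift that common block to the front, so that $n\lambda_i \bmod 1$ lies within $10^{-R}<\epsilon$ of $1/2$ for all $i$. You instead invoke Kronecker/Weyl: the orbit $\left(n\beta_1,n\beta_2,n\beta_3\right)\bmod 1$ is dense in the torus once $1,\beta_1,\beta_2,\beta_3$ are rationally independent, and you verify that this independence holds almost surely via the same structural fact the paper uses only informally (to claim independence of the $\lambda_i$), namely that each direct-path delay $\tau_{i,i}$ enters $\Delta\tau_i$ alone and never enters $S$; hence any candidate rational relation reduces to a nontrivial linear constraint on the delays, a null event, and the countable union over rational tuples is still null. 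Your route buys rigor: the paper's claim that the digits of $\lambda_i \bmod 1$ are exactly i.i.d.\ uniform is false for a general continuous distribution (it is only asymptotically valid for deep digits), and its assertion that the blocks $\mathcal{M}_r$ constitute infinitely many independent realizations is left unjustified, whereas Kronecker's theorem sidesteps both points and additionally yields equidistribution, hence infinitely many admissible $n$ rather than only powers of ten. The price is the genericity verification, which you discharge correctly. Both arguments establish the lemma in the same almost-sure sense, and your closing continuity-plus-finite-bandwidth step matches the paper's intended reading of the bound \eqref{eq:IA_los_sum_UpperBound} with $\left|\bar{h}_i\right|=\left|h_{i,i}\right|$.
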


\begin{proof}
%
%
 The minimal sub-carrier spacing depends only on the delays and the delays are continuously and independently distributed. Hence also the products $\lambda_i=\Delta f_\text{min}\Delta\tau_i$ are continuously distributed. They are even independently distributed, since $\tau_{i,i}$ appears only in $\Delta\tau_i$. We can write $\lambda_i \mod 1$ with its infinitely long decimal expansion as
   \begin{IEEEeqnarray}{rCl}
  \lambda_i \mod 1 &=& 0.\lambda_i^{\left[1\right]}\lambda_i^{\left[2\right]}\lambda_i^{\left[3\right]} \ldots,
     \end{IEEEeqnarray}
 where each element $\lambda_i^{\left[l\right]}$ of the sequence is i.i.d. and takes on the values $\left\{0,1,2,\ldots9\right\}$ with equal probability.

 We wish to show that $\exists n \in \left\{ \mathbb{Z}:0<n<N\right\}$ with $N\rightarrow \infty$ such that $\left( n \lambda_i \mod 1 \right)$ $\forall i$ is arbitrarily close to some number $\mu\in(0,1)$.
 We do this by looking for strings of decimal places of $\lambda_i$ which are equal for all $i$ and which are, when shifted to the first decimal places, close enough to the desired number $\mu$. We then choose $n$ to shift the resulting sequence to the first decimal places.
 
 We choose $R\in\mathbb{Z}$ such that $10^{-R}<\epsilon$, where $0<\epsilon<1$. Our goal is to find an $r$ such, that the random variables $\mathcal{M}_r = \left\{ \lambda_i^{\left[w\right]}:\forall i, w=r,r+1,\ldots,r+R-1 \right\}$ fulfill the condition
    \begin{IEEEeqnarray}{rCl}
 \label{eq:dec_exp}
  \lambda_i^{\left[r\right]} \lambda_i^{\left[r+1\right]} \ldots \lambda_i^{\left[r+R-1\right]} &=& \mu^{\left[1\right]} \mu^{\left[2\right]} \ldots \mu^{\left[R\right]}, \quad \forall i,
     \end{IEEEeqnarray}
 where $\mu^{\left[w\right]}$ is the $w$-th position of the decimal expansion of $\mu$. The probability that the variables $\mathcal{M}_r$ fulfill the conditions \eqref{eq:dec_exp} for a given $r$ is positive. There are infinite independent realizations of the set $\mathcal{M}_r$, hence $\exists r$ such that the set $\mathcal{M}_r$ fulfills conditions \eqref{eq:dec_exp}.
 We complete the proof by choosing $n = 10^r$ and $\mu=1/2$.
\end{proof}

Lemma \ref{lem:conv_upB} ensures that by increasing the bandwidth and optimizing the choice of $\Delta f = n \Delta f_\text{min}$ we can get arbitrarily close to the upper bound of the presented scheme.

\subsection{Connection to Time Based Interference Alignment}
 Time based IA aligns interference by transmitting only in every other time slot and by (possibly) using different offsets. Interference is aligned when at the receivers the interference arrives in the same time slot, while the useful signals arrive in a different time slot. Analyses of time based IA can be found in \cite{Cadambe2007}, \cite{Grokop2008} or \cite{Blasco2011} for example. 
  
 We show that time based IA is a special case of subcarrier IA.
Choosing a precoder $\vec{v}_k$ in the frequency domain translates to the time domain signal
\begin{IEEEeqnarray}{rcccl}
 \label{eq:timeIA}
 \begin{bmatrix}
    \mathcal{X}_{k}[t]\\
		\mathcal{X}_{k}[t+1]
 \end{bmatrix}
 &=&
 \underbrace{
 \begin{bmatrix}
    1 & 1\\
		1 & -1
 \end{bmatrix}}_{\mat{F}^\dagger}
 \begin{bmatrix}
    v_{k}^{(1)}\\
		v_{k}^{(2)}
 \end{bmatrix}
 s_k
 &=&
 \begin{bmatrix}
    \left(v_{k}^{(1)} +	v_{k}^{(2)}\right)s_k \\
    \left(v_{k}^{(1)} -	v_{k}^{(2)}\right)s_k
 \end{bmatrix}\quad
\end{IEEEeqnarray}
where $\mat{F}^\dagger$ is the IDFT matrix.
Since for time based IA nothing is transmitted in the second time slot, we have $v_{k}^{(1)} -	v_{k}^{(2)} = 0$. Thus $\ln\!\left( v^{(2)}_{k}/v^{(1)}_{k}\right)=0$, $\forall k$ follows. In a similar way we obtain $\ln\!\left( u^{(2)\ast}_{i}/u^{(1)\ast}_{i}\right)=0$, $\forall i$.
This means that the right-hand side in \eqref{eq:IA_linSys} must be $\vec{b}=\vec{0}$, which automatically fulfills \eqref{eq:con_3user} and hence \eqref{eq:th_3user} and \eqref{eq:IA3user_los}. From $\vec{b}=\vec{0}$ it follows that
\begin{IEEEeqnarray}{rCcCl}
j\pi\!\left(1+2n_{i,k}\right) &=& \ln\!\left(\! \frac{h_{i,k}^{(2)}}{h_{i,k}^{(1)}}\!\right) &=& -j2\pi\Delta f \tau_{i,k}, \quad \forall i \neq k\quad
\end{IEEEeqnarray}
from where we obtain the conditions on the subcarrier spacing
\begin{IEEEeqnarray}{rCl}
\Delta f = \frac{1+2n_{i,k}}{2 \tau_{i,k}}, \quad \forall i \neq k.\quad
\end{IEEEeqnarray}
For $K=3$ there are six fractions that must be equal to each other and which determine $\Delta f$. The denominators of the fractions are real numbers while the numerators are integers. Since the delays are i.i.d., equality of these fractions is approached only by choosing larger integer numerators. This means that feasibility is achieved only asymptotically for increasing $\Delta f$, which translates to decreasing slot lengths in the time domain. This is precisely what Theorem 1 in \cite{Cadambe2007} states.
But we are able to determine subcarrier spacings which achieve feasibility \emph{exactly} for $K=3$. This shows that restricting the choice of the precoder, as time based IA does, prohibits achieving the DoF exactly.

\section{Simulation Results}
\label{sec:simu_res}
Consider a 3 user pair line-of-sight channel, where the transmitter-receiver distances $d_{i,k}$ are continuously and independently distributed. The delays are related to the distances by
  \begin{IEEEeqnarray}{rCl}
\tau_{i,k} = \frac{c}{d_{i,k}}
  \end{IEEEeqnarray}
where $c$ is the speed of wave propagation, which we set to the speed of light $c=3 \cdot 10^8 \,\text{m}/\text{s}$. The channel amplitudes are obtained from the distances as
  \begin{IEEEeqnarray}{rCl}
\left|h_{i,k}\right| = \left(\frac{1\text{m}}{d_{i,k}}\right)^\gamma
  \end{IEEEeqnarray}
where we choose the path-loss exponent $\gamma = 3.76$.

The distances of the direct channels are distributed as $d_{i,i}\in[150\text{m},250\text{m}]$, and the distances of the cross channels as $d_{i,k}\in[250\text{m},350\text{m}]$, $i\neq k$. The direct channels thus have the largest amplitudes and we do not have too small distances (for which treating interference as noise works best). We average over $10^4$ channel realizations.

As benchmark schemes we consider (I) treating \emph{Interference as Noise} and (II) an orthogonal access scheme, where we use \emph{TDMA}. For treating \emph{Interference as Noise}, each transmitter transmits two streams for every channel use and at the receivers the interference is treated as noise. For the \emph{TDMA} scheme, each transmitter transmits only in every $K$-th slot, but with $K$ times the power. Since only one pair communicates per slot, the receiver can receive two streams without interference.

To obtain the precoder and receive filter for IA, we use the pseudo-inverse of $\mat{A}$ to obtain a solution (or a least-squares solution, if IA is infeasible) for the system of linear equations \eqref{eq:IA_linSys}.
Since we are interested mainly in the DoF, we consider only interference-zero-forcing approaches. Other approaches, e.g. Max\-SINR or MMSE, will be examined in future work.

The values of $\Delta f_\text{min}$ seem to be Rayleigh-distributed, where more than $95\%$ of the values are between $10^6 \text{Hz}$ and $10^8 \text{Hz}$ for the considered scenario. These values depend strongly on the distances and the speed of wave propagation. For increasing distances or decreasing $c$ (e.g. under-water communication) the distribution of $\Delta f_\text{min}$ is shifted to lower frequencies.

Figure \ref{fig:Rsum_conv} shows the average sum-rate of the benchmark schemes and of IA for an average received SNR from the direct channels of $20\text{dB}$. The x-axis is normalized to $1/\Delta f_\text{min}$, where $\Delta f_\text{min}$ is different for every channel realization. As expected, the benchmark schemes perform independent of the subcarrier spacing. For IA we plot three curves.
The curve labeled \emph{IA ZF} is the average sum-rate with the current subcarrier spacing. As expected, we observe peaks at multiples of $\Delta f_\text{min}$. Note that for small deviations from the optimal $\Delta f_\text{min}$ there are small reductions in sum-rate. A subcarrier spacing between multiples of $\Delta f_\text{min}$ leads to leakage interference, which prevents achieving the maximal DoF. But for finite SNR we achieve a good performance when the direct channel's amplitude is large. The curve labeled \emph{Max IA ZF} is obtained in two steps: For each channel realization the maximal sum-rate within the bandwidth equal to the x-axis' value is determined. In the next step we take the average and obtain the curve labeled \emph{Max IA ZF}. A steep increase of this curve can be observed around $\Delta f_\text{min}$ due to the feasibility of IA. With increasing bandwidth the curve labeled \emph{Max IA ZF} approaches the curve labeled \emph{IA Upper Bound}, which is the average of the upper bounds \eqref{eq:IA_los_sum_UpperBound}.

\begin{figure}[htbp]
  \centering
  \psfrag{optimal delta}[][][.7]{$f/\Delta f_\text{min}$}
  \includegraphics[width=.48\textwidth]{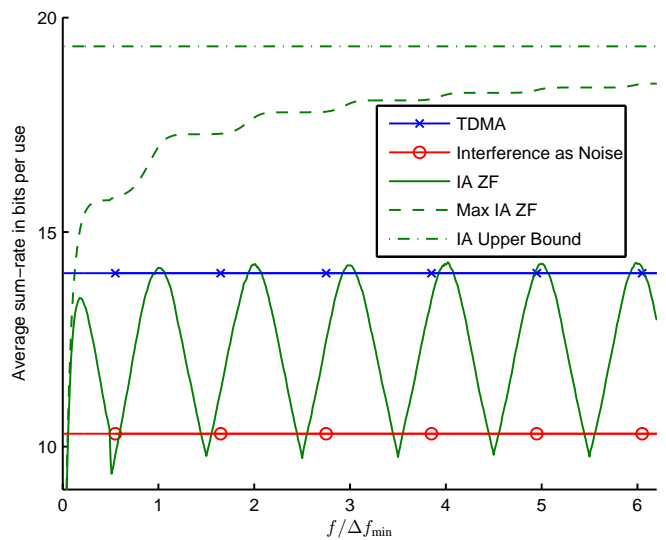}
  \caption{Average sum-rate for randomly distributed distances, where $d_{i,i}\in[150m,250m]$ and $d_{i,k}\in[250m,350m]$, $i\neq k$ and $\gamma = 3.76$ and the average received SNR from the direct channels is $20\text{dB}$.}
  \label{fig:Rsum_conv}
\end{figure}

\section{Conclusions}
\label{sec:conc}
We derived conditions for feasibility of IA via two orthogonal subcarriers. For line-of-sight channels these conditions can be fulfilled by carefully choosing the subcarrier spacing.


\section*{Acknowledgment}
S. Dierks and G. Kramer were supported by the German Ministry of Education and Research in the framework of an Alexander von Humboldt Professorship.
%

%
\bibliographystyle{IEEEtran}
\bibliography{/home/dierks/Literature/library}
%
%

\end{document}